\newtheorem{theorem}{Theorem}
\theoremstyle{definition}
\newtheorem{definition}{Definition}
\newtheorem{remark}{Remark}
\newtheorem{assumption}{Assumption}
\newtheorem{problem}{Problem}
\title{\LARGE \bf
Position and Orientation Based Formation Control of Multiple Rigid Bodies with Collision Avoidance and Connectivity Maintenance}
\author{Christos K. Verginis, Alexandros Nikou and Dimos V. Dimarogonas
\thanks{The authors are with the ACCESS Linnaeus Center, School of Electrical
Engineering, KTH Royal Institute of Technology, SE-100 44, Stockholm,
Sweden and with the KTH Center for Autonomous Systems. Email: {\tt\small \{cverginis, anikou, dimos\}@kth.se}. This work was supported by the H2020 ERC Starting Grant BUCOPHSYS, the Swedish Research Council (VR), the Swedish Foundation for Strategic Research (SSF), the Knut och Alice Wallenberg Foundation, the European Union's Horizon 2020 Research and Innovation Programme under the Grant Agreement No. 644128 (AEROWORKS) and the EU H2020 Research and Innovation Programme under GA No. 731869 (Co4Robots).}
}
\begin{document}

\maketitle
\thispagestyle{empty}
\pagestyle{empty}

\begin{abstract}
	This paper addresses the problem of position- and orientation-based formation control of a class of second-order nonlinear multi-agent systems in a $3$D workspace with obstacles. More specifically, we design a decentralized control protocol such that each agent achieves a predefined geometric formation with its initial neighbors, while using local information based on a limited sensing radius. The latter implies that the proposed scheme guarantees that the initially connected agents remain always connected. In addition, by introducing certain distance constraints, we guarantee inter-agent collision avoidance as well as collision avoidance with the obstacles and the boundary of the workspace. 
	Finally, simulation results verify the validity of the proposed framework.
\end{abstract}
\vspace{-3mm}
\section{Introduction}

During the last decades, decentralized control of multi-agent systems has gained a significant amount of attention due to the great variety of its applications, including  multi-robot systems, transportation, multi-point surveillance and biological systems. The main focus of multi-agent systems is the design of distributed control protocols in order to achieve global tasks, such as consensus, and at the same time fulfill certain properties, e.g., network connectivity. 

A particular multi-agent problem that has been considered in the literature is the formation control problem, where the agents represent robots that aim to form a prescribed geometrical shape, specified by a certain set of desired relative configurations. The main categories of formation control that have been studied in the related literature are (\cite{oh_park_ahn_2015}) position-based control, displacement-based control, distance-based control and orientation-based control. 

In position-based formation control, the agents control their positions to achieve the desired formation, as prescribed by some desired position offsets with respect to a global coordinate system. When orientation alignment is considered as a control design goal, the problem is known as orientation-based (or bearing-based) formation control. The desired formation is then defined by relative inter-agent orientations. The orientation-based control steers the agents to configurations that achieve desired relative orientation angles. In this work, we aim to design decentralized control protocols such that both position- and orientation-based formation are achieved.

The literature in position-based formation control is rich, and is traditionally categorized in single or double integrator agent dynamics and directed or undirected communication topologies (see e.g. \cite{beard_2001_coordination, egerstedt_formation, fax_murray_2004, do_2007_formation, dong_farrell_2008_cooperative, anderson_yu_fidan_hendrickx_2008, sepulchre_2008_symmeetric_formation, krick_broucke_francis_2009, dorfler_francis_2009, lin_jia_2010_rotating_formation, mesbahi_2010_graph_theory, cao_morse_yu_anderson_dagsputa_2011, belabbas2012robustness, zavlanos_2008_distributed}). Orientation-based formation control has been addressed in \cite{basiri_2010_angle_formation, eren_2012_bearing_formation, zhao2016bearing, oh_ahn_2014_angle_based_formation}, whereas the authors in \cite{oh_ahn_2014_angle_based_formation, bishop_2015_distributed, fathian2016globally} have considered the combination of position- and orientation-based formation.

The dominant case in the related literature of formation control is the two-dimensional one with simple dynamics and point-mass agents. In real applications, however, the engineering systems may have nonlinear $2$nd order dynamics, for which due to imperfect modeling the exact model is not a priori known. Other objectives concern connectivity maintenance, collision avoidance between the agents as well as collision avoidance between the agents and potential obstacles of the workspace, which renders the formation control problem a particularly challenging task. According to the authors' best knowledge, the combination of the aforementioned specifications has not been addressed in the related literature.

Motivated by this, we aim to address here the position-based formation control problem with orientation alignment for a team of rigid bodies operating in $3$D space, with $2$nd order nonlinear dynamics. We propose a decentralized control protocol that guarantees a geometric prescribed position- and orientation-based formation between initially connected agents. The proposed methodology
guarantees inter-agent collision avoidance and collision avoidance with the obstacles and the boundary of the workspace. In parallel, connectivity maintenance of the initially connected agents as well as representation singularity avoidance are ensured. In order to deal with the aforementioned specifications, we employ a novel class of potential functions. A special case of correct-by-construction potential functions, namely \textit{navigation functions}, has been introduced in \cite{koditschek1990robot} for the single-robot navigation, and has been employed in multi-agent formation control in \cite{jadbabaie_nf_formation, tanner_2005_formation_nf,kan2012network,dixon_2014_col_avoid_nf,dimarogonas2010analysis,tanner2012multiagent}. A more general potential function framework has been employed in \cite{do_2007_formation}. The aforementioned works, however, have only addressed the single integrator case, with no straightforward extension to higher-order systems. The authors in \cite{dimos2005_nf_2nd_order} deal with the double integrator case, but the goal was only navigation of the agents to specific points.  

In our previous work \cite{alex_chris_ppc_formation_ifac}, we treated a similar problem by utilizing a Prescribed Performance Control (PPC) scheme instead (for PPC controller design we refer to \cite{bechlioulis_tac_2008}), while only guaranteeing collision avoidance between neighboring agents forming a tree, with no obstacles or representation singularity avoidance. 
The main contribution of this paper is a novel decentralized control protocol scheme that generalizes \cite{alex_chris_ppc_formation_ifac} and solves a wider class of problems of multiple rigid bodies under Lagrangian dynamics with guaranteed collision avoidance among the agents, collision avoidance between agents and obstacles as well as singularity avoidance. 

The remainder of the paper is structured as follows. Section \ref{sec:preliminaries} gives the necessary notation and section \ref{sec:prob_formulation} provides the system and the problem statement. Section \ref{sec:solution} illustrates the proposed solution and Section \ref{sec:simulation_results} is devoted to a simulation example. Finally, conclusions and future work are discussed in Section \ref{sec:conclusions}.
\section{Notation} \label{sec:preliminaries}

The set of positive integers is denoted by $\mathbb{N}$. The real $n$-coordinate space, with $n\in\mathbb{N}$, is denoted by $\mathbb{R}^n$;
$\mathbb{R}^n_{\geq 0}$ and $\mathbb{R}^n_{> 0}$ are the sets of real $n$-vectors with all elements nonnegative and positive, respectively. Given a set $S$, we denote by $\lvert S\lvert$ its cardinality and by $S^N = S \times \dots \times S$ its $N$-fold Cartesian product. The notation $\|x\|$ is used for the Euclidean norm of a vector $x \in \mathbb{R}^n$. Define by $I_n \in \mathbb{R}^{n \times n}, 0_{m \times n} \in \mathbb{R}^{m \times n}$ the identity matrix and the $m \times n$ matrix with all entries zeros, respectively. A matrix $S \in \mathbb{R}^{n \times n}$ is called skew-symmetric if and only if $S^\top = -S$; $\mathcal{B}(c,r) = \{x \in \mathbb{R}^3: \|x-c\| \leq r\}$ is the $3$D sphere of center $c\in\mathbb{R}^{3}$ and radius $r \in \mathbb{R}_{> 0}$ and $\mathring{B}(c,r)$ its interior. Given a scalar function $y:\mathbb{R}^{n}\to\mathbb{R}$ and a vector $x\in\mathbb{R}^n$, denote by $\nabla_{x}y(x) = \tfrac{\partial y(x)}{\partial x} = [\tfrac{\partial y(x)}{\partial x_1},\dots, \tfrac{\partial y(x)}{\partial x_n}]^\top \in\mathbb{R}^n$ the gradient of $y$. The vector connecting the origins of coordinate frames $\{A\}$ and $\{B$\} expressed in frame $\{C\}$ coordinates in $3$D space is denoted by $p^{\scriptscriptstyle C}_{{\scriptscriptstyle B/A}}\in{\mathbb{R}}^{3}$. 
We further denote by $q_{\scriptscriptstyle B/A} = [\phi_{\scriptscriptstyle B/A}, \theta_{\scriptscriptstyle B/A}, \psi_{\scriptscriptstyle B/A}]^\tau \in\mathbb{T}$ the Euler angles representing the orientation of frame $\{B\}$ with respect to frame $\{A\}$,  with $\mathbb{T}=[-\pi,\pi]\times[-\tfrac{\pi}{2},\tfrac{\pi}{2}]\times[-\pi,\pi]$.
The angular velocity of frame $\{B\}$ with respect to $\{A\}$, expressed in frame $\{C\}$ coordinates, is denoted by $\omega^{\scriptscriptstyle C}_{\scriptscriptstyle B/A}\in \mathbb{R}^{3}$. We also use the notation $\mathbb{M} = \mathbb{R}^3\times \mathbb{T}$. For notational brevity, when a coordinate frame corresponds to an inertial frame of reference $\{0\}$, we will omit its explicit notation (e.g., $p_{\scriptscriptstyle B} = p^{\scriptscriptstyle 0}_{\scriptscriptstyle B/0}, \omega_{\scriptscriptstyle B} = \omega^{\scriptscriptstyle 0}_{\scriptscriptstyle B/0}$). All vector and matrix differentiations are derived with respect to the inertial frame $\{0\}$, unless otherwise stated.
\section{Problem Formulation} \label{sec:prob_formulation}
Consider a set of $N$ rigid bodies, with $\mathcal{V} = \{ 1,2, \ldots, N\}$, 

\noindent $N  \geq 2$, operating in a workspace $W \subseteq \mathbb{M}$, with coordinate frames $\{i\}, i\in\mathcal{V}$, attached to their centers of mass. The workspace is assumed to be modeled as a bounded sphere $W = \mathring{\mathcal{B}}(p_w,r_w)$ with center $p_w$ and radius $r_w$. Without loss of generality, we assume that $p_w = 0_{3 \times 1}$, representing an inertial reference frame $\{0\}$. The subscript $w$ stands for the workspace $W$. We consider that each agent occupies a sphere $\mathcal{B}(p_i, r_i)$, where $p_i\in \mathbb{R}^3$ is the position of the agent's center of mass and $r_i < r_w$ is the agent's radius. We also denote by $q_i\in\mathbb{T}, i\in\mathcal{V}$, the Euler angles representing the agents' orientation with respect to $\{0\}$, with $q_i = [\phi_i,\theta_i,\psi_i]^\top$. By defining $x_i\in W, v_i \in\mathbb{R}^6,$ with $x_i = [p^\top_i,q^\top_i]^\top, v_i =[\dot{p}^\top_i,\omega^\top_i]^\top$, we model each agent's motion with the $2$nd order dynamics: 
\vspace{-1mm}
\begin{subequations}\label{eq:system} 
	\begin{align} 
		&\dot{x}_i = J^{-1}_i(q_i)v_i , \label{eq:system_1} \\ 
		& M_i(x_i) \dot{v}_i + C_i(x_i,\dot{x}_i) v_i+g_i(x_i)= u_i,  \label{eq:system_2} 
	\end{align}
\end{subequations}
where $J_i:\mathbb{T} \to \mathbb{R}^{6\times6}$ is a \textit{Jacobian matrix} that maps the Euler angle rates to $v_i$, given by 
\begin{align} 
	J_i(q_i) 
	&=
	\begin{bmatrix}
		I_3 & 0_{3 \times 3} \\
		0_{3 \times 3} & J_{q_i}(q_i) \\
	\end{bmatrix} \notag, \\
	J_{q_i}(q_i) 
	&= 
	\begin{bmatrix}
		1 & 0  & \sin(\theta_i) \\
		0 & \cos(\phi_i) & -\cos(\theta_i) \sin(\phi_i) \\
		0  & \sin(\phi_i) & \cos(\phi_i) \cos(\theta_i)
	\end{bmatrix}, \notag
\end{align}
and $J^{-1}_i(q_i)$ is its matrix inverse.
The matrix $J_i$ is singular when $\det(J_i) = \cos(\theta_i) = 0\Leftrightarrow \theta_i = \pm\tfrac{\pi}{2}$, which we refer to as \textit{representation singularity}. The proposed controller will guarantee, however, that this is always avoided and thus \eqref{eq:system_1} is well defined.  

Furthermore, $M_i: W \to \mathbb{R}^{6\times6}$ is the positive definite \emph{inertia matrix}, $C_i: W \times \mathbb{R}^6 \to \mathbb{R}^{6\times6}$ is the \emph{Coriolis matrix}, and $g_i: W \to \mathbb{R}^6$ is the \emph{gravity vector}. We consider that the Coriolis and the inertia vector fields are \emph{unknown}.
Finally, $u_i \in \mathbb{R}^6$ is the control input vector representing the $6$D generalized \emph{actuation force} acting on agent $i\in\mathcal{V}$. Let us also define the stack vectors $x = [x_1^\top,\dots,x_N^\top]^\top \in W^N$ and $v = [v_1^\top, \dots, v_N^\top]^\top \in \mathbb{R}^{6N}$. 
In addition, the matrices $\dot{M}_i - 2C_i$ are skew-symmetric \cite{Siciliano2009}, i.e., $y^\top \left[\dot{M}_i - 2 C_i\right]y = 0, \forall y \in \mathbb{R}^6, i \in \mathcal{V}$.
For the state measurement of each agent, the following assumption is required.

\begin{assumption} (Measurements Assumption)
	Each agent $i$ can measure its own states $p_i,q_i, \dot{p}_i, v_i, i\in\mathcal{V}$, and has a limited sensing range of: $d_i > \max\{r_i+r_j: i,j \in \mathcal{V}, i \neq j\}.$ 
\end{assumption}
Therefore, by defining the neighboring set as $\mathcal{N}_i(x_i) = \{j\in\mathcal{V} : p_j\in\mathcal{B}(p_i, d_i),i\neq j\}$, in view of the aforementioned assumption, agent $i$ knows at each configuration $x_i$ all $p^i_{j/i}$, $q_{j/i}$ and, since it knows its own $p_i,q_i$, it can compute all $p_{j}$, $q_{j}$, $\forall \ j \in \mathcal{N}_i(x_i), x_i\in W$. For the neighboring set $\mathcal{N}_i(x_i)$ define also $N_i(x_i) = |\mathcal{N}_i(x_i)|$. 

Moreover, we consider that in the given workspace there exist $Z\in\mathbb{N}$ \emph{static obstacles}, with $\mathcal{Z} \triangleq \{1,\dots,Z\}$, modeled as the spheres $\mathcal{B}(p_{o_z}, r_{o_z})$, with centers and radii $p_{o_z}\in\mathbb{R}^3, r_{o_z}\in \mathbb{R}_{>0}, z \in \mathcal{Z}$, respectively. The geometry of two agents $i,j$ and an obstacle $z$ in the workspace $W$ is depicted in Fig. \ref{fig:agents_geometry}.

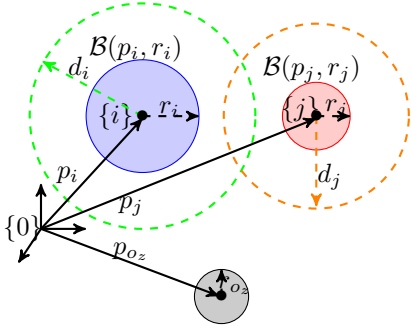
\begin{figure}[t!]
\vspace{0.3cm}
	\centering
	\begin{tikzpicture}[scale = 0.3]	
	\draw [color=black,thick,->,>=stealth'](-9, -5) to (-7, -5);
	\draw [color=black,thick,->,>=stealth'](-9, -5) to (-9, -3);
	\draw [color=black,thick,->,>=stealth'](-9, -5) to (-10, -6.5);
	\node at (-9.8, -5.0) {$\{0\}$};
	
	\draw [color = blue, fill = blue!20] (-4.5,0) circle (2.5cm);
	\node at (-5.7, 0.0) {$\{i\}$};
	\draw[green,thick,dashed] (-4.5,0) circle (5.0cm);
	\draw [color=black,thick,->,>=stealth'](-9, -5) to (-4.5, -0.1);
	\node at (-7.80, -2.6) {$p_i$};
	\draw [color=green,thick,dashed,->,>=stealth'](-4.5, 0.0) to (-8.93, 2.43);
	\node at (-7.3, 2.15) {$d_i$};
	\draw [color=black,thick,dashed,->,>=stealth'](-4.5, 0.0) to (-2.0, 0.0);
	\node at (-3.3, 0.3) {$r_i$};
	\node at (-4.5, 0.0) {$\bullet$};
	\node at (-4.8, 3.0) {$\mathcal{B}(p_i, r_i)$};
	
	\draw [color = red, fill = red!20] (3.2, 0) circle (1.5cm);
	\node at (2.5, 0.3) {$\{j\}$};
	\draw[orange,thick,dashed,] (3.2, 0) circle (4.1cm);
	\draw [color=black,thick,->,>=stealth'](-9, -5) to (3.2, -0.1);
	\node at (-5.0, -4.0) {$p_j$};
	\draw [color=orange,thick,dashed,->,>=stealth'](3.2, 0.0) to (3.2, -4.0);
	\node at (3.8, -2.7) {$d_j$};
	\draw [color=black,thick,dashed,->,>=stealth'](3.2, 0.0) to (4.7, 0.0);
	\node at (4.1, 0.3) {$r_j$};
	\node at (3.2, 0.0) {$\bullet$};
	\node at (3.0, 2.1) {$\mathcal{B}(p_j, r_j)$};
	
	\draw [color = black, fill = black!20] (-1, -8) circle (1.2cm);
	\draw [color=black,thick,->,>=stealth'](-9, -5) to (-1.1, -7.98);
	\draw [color=black,thick,dashed,->,>=stealth'](-1, -8) to (-1, -6.8);
	\node at (-1, -8) {$\bullet$};
	\node at (-5.0, -6.0) {$p_{o_z}$};
	\node at (-0.40, -7.5) {$r_{o_z}$};
	\end{tikzpicture}
	\caption{Illustration of two moving agents $i, j \in \mathcal{V}$ and a static obstacle $o_z$ in the workspace; $\{0\}$ is the inertial frame, $\{i\}, \{j\}$ are the frames attached to the agents' center of mass, $p_i, p_j, p_{o_z} \in \mathbb{R}^3$ are the positions of the center of mass of the agents $i,j$ and the obstacle $o_z$, respectively, with respect to $\{0\}$; $r_i, r_j, r_{o_z}$ are the radii of the agents $i,j$ and the obstacle $o_z$, respectively; $d_i, d_j$ with $d_i > d_j$ are the agents' sensing ranges. Note that the agents are not neighbors since $p_j \notin \mathcal{B}(p_i, d_i)$ and $p_i \notin \mathcal{B}(p_j, d_j)$.}
	\label{fig:agents_geometry}
\end{figure}
Let us define the distances $d_{ij,a}: \mathbb{R}^6 \to \mathbb{R}_{\geq 0}$, $d_{iz,o}:\mathbb{R}^3 \to \mathbb{R}_{\geq 0}$, with: $d_{ij,a}(p_i,p_j) = \| p_i - p_j \|, d_{iz,o}(p_i) = \| p_i - p_{o_z} \|$, $\forall i,j\in\mathcal{V}, i\neq j,z\in\mathcal{Z}$, as well as the constants $\underline{d}_{ij, a} = r_{i} + r_{j}, \underline{d}_{iz, o} = r_{i} + r_{o_z}$, that represent the minimum distance such that agents $i$ and $j$ and agent $i$ and object $z$, do not collide, respectively. The subscripts $a$ and $o$ stand for \textit{agent} and \textit{obstacle}, respectively. The following assumption is required, for the feasibility of the problem:
\begin{assumption} \label{as:geometry}
It holds that 
\begin{enumerate}
\item $\lVert p_{o_z} - p_{o_{z'}} \rVert \geq 2\max\limits_{i\in\mathcal{V}}\{r_i\} + r_{o_z} + r_{o_{z'}} + \varepsilon_r, \forall z,z'\in\mathcal{Z}$, with $z\neq z'$,
\item $r_w - (\lVert p_{o_z} \rVert + r_{o_z}) \geq 2\max\limits_{i\in\mathcal{V}}\{r_i\} + \varepsilon_r, \forall z\in\mathcal{Z}$, 
\end{enumerate}
where $\varepsilon_r$ is an arbitrarily small positive constant.
\end{assumption}
The aforementioned assumption states that there is enough space between the obstacles and the workspace boundary as well as the obstacles themselves for the agents to navigate among them. 

Due to the fact that the agents are not dimensionless and their sensing capabilities are limited, the control protocol, except from achieving desired position formation (define it by $p_{ij, \text{des}}$) and desired formation angles (define it by $q_{ij, \text{des}}$) for all neighboring agents $i \in \mathcal{V}, j \in \mathcal{N}_i(x_i(0))$, it should also guarantee for all $t\in\mathbb{R}_{\geq 0}$ that (i) all the agents avoid collision with every other agent; (ii) all the agents avoid collision with all the obstacles; (iii) all the agents avoid collision with the workspace boundary, (iv) all the initial edges are maintained, i.e., connectivity maintenance, and (v) the singularity of the Jacobian  matrices $J_i$ is avoided. 



\begin{definition} \label{def:feasible_formation}
	(Feasible Formation) Given the initial neighboring sets $\mathcal{N}_i(x_i(0)),i\in\mathcal{V}$, the desired displacements $x_{ij,\text{des}} = [p^\top_{ij,\text{des}},q^\top_{ij,\text{des}}]^\top$ that characterize a formation configuration, are called \textit{feasible} if $|\theta_{ij,\text{des}}| <\pi$, $\forall j\in\mathcal{N}_i(x_i(0)),i\in\mathcal{V}$,  and 
	$\bigcap_{i \in V} \{x_i \in W  : \lVert x_i - x_j -x_{ij,\text{des}} \rVert = 0, d_{iz,o}(p_i) > 0, \lVert p_i \rVert + r_i < r_w, \forall z\in\mathcal{Z},j\in\mathcal{N}_i(x_i(0)) \} \neq \emptyset$.
\end{definition}



Formally, the control problem under the aforementioned constraints is formulated as follows:
\begin{problem} \label{problem}
	Consider $N$ agents governed by the dynamics \eqref{eq:system} and operating in a workspace $W$ with $Z$ spherical obstacles, with:
	\begin{itemize}
		\item $v_i(0) = 0_{6 \times 1}, \forall i\in\mathcal{V}$,
		\item $-\frac{\pi}{2} < -\bar{\theta} \leq \theta_i(0) \leq \bar{\theta} < \tfrac{\pi}{2}, \forall i\in\mathcal{V}$,
		\item $\lVert p_i(0) - p_j(0) \rVert > \underline{d}_{ij,a}, \forall i,j\in\mathcal{V}, i\neq j$,
		\item $\lVert p_i(0) - p_{o_z}(0) \rVert > \underline{d}_{iz,o}, \forall i\in\mathcal{V},z\in\mathcal{Z}$,
	\end{itemize}
	i.e., singularity- and collision- free configurations at $t=0$, where $\bar{\theta}$ is an arbitrary constant in the open set $(0,\frac{\pi}{2})$.
	Then, given a nonempty initial set $\mathcal{N}_i(x_i(0)) \neq \emptyset$, \textit{feasible} inter-agent displacements $p_{ij, \text{des}}, q_{ij, \text{des}}, \forall i \in \mathcal{V}, j \in \mathcal{N}_i(x_i(0))$, such that $\underline{d}_{ij,a} < d_{ij,a}(p_i,p_j) < d_{i}, \forall (p_i,p_j)\in\{(p_i,p_j)\in \mathbb{R}^6 : \lVert x_i - x_j - x_{ij,\text{des}} \rVert = 0 \}$, design decentralized control laws $u_i$, such that for every $i \in \mathcal{V}$ the following hold:  
	\begin{enumerate}
		\item 
		
		$\lim\limits_{t \to \infty} \lVert x_{i}(t)-x_{j}(t) - x_{ij, \text{des}} \rVert =0, \forall j\in\mathcal{N}_i(x_i(0)),$\\
		
		
		
		\item 
		$\|p_i(t)-p_j(t)\| > \underline{d}_{ij, a}, \forall j\in\mathcal{V}\backslash\{i\},  t \in \mathbb{R}_{\geq 0}$,\\
		
		\item 
		$\lVert p_i(t) - p_{o_z}(t) \rVert > \underline{d}_{iz,o}, \forall z\in\mathcal{Z}, t \in \mathbb{R}_{\geq 0}$,\\
		
		\item $\|p_i(t)\|+r_i < r_w, \forall t \in \mathbb{R}_{\ge 0}$. \\	
		
		\item 
		$\|p_i(t)-p_j(t)\| < d_{i}, \forall j \in \mathcal{N}_i(x_i(0)), t \in \mathbb{R}_{\geq 0}$,\\
		
		\item $	-\frac{\pi}{2} < -\bar{\theta}  \leq \theta_i(t) \leq \bar{\theta} < \frac{\pi}{2}$, $\forall t\in\mathbb{R}_{\ge 0}$,	
	\end{enumerate}
\end{problem}
The aforementioned specifications imply the following: $1)$ stands for formation control (both position and orientation); $2)$ stands for inter-agent collision avoidance; $3)$ stands for collision avoidance between the agents and the obstacles; $4)$ stands for collision avoidance between the agents and the boundary; $5)$ stands for connectivity maintenance of the initially connected agents and finally, $6)$ stands for the representation of singularities avoidance.

\section{Problem Solution} \label{sec:solution}

In this section, a systematic solution to Problem \ref{problem} is introduced. 
In particular, the following analysis is performed: First, the form of the proposed potential function along with its components is described. Then, we provide the proposed decentralized controllers that guarantee the satisfaction of all the control specifications. The required stability analysis is presented subsequently.

In order to solve the formation control problem with the collision- and singularity- avoidance as well as connectivity maintenance, we use a \emph{decentralized potential function} for each agent $i \in \mathcal{V}$ as $\varphi_i(x)$, with the following properties:
\begin{enumerate}[(i)]
	\item The function $\varphi_i(x)$, is not defined, i.e., $\varphi_i(x) = \infty$, $\forall i\in\mathcal{V}$, when a collision or a connectivity break occurs,
	\item The points where $\nabla_{x_i}\varphi_i(x) = 0$, $\forall i\in\mathcal{V}$, consist of the goal configurations and a set of configurations whose region of attraction (by following the vector field curves) is a set of measure zero.	
	\item It holds that $\nabla_{x_i} \left(\varphi_i(x) + \sum\limits_{j\in\mathcal{N}_i(x_i)}\varphi_j(x)\right)=0$ $\Leftrightarrow \nabla_{x_i}\varphi_i(x) = 0$ and $\nabla_{x_i}\sum\limits_{j\in\mathcal{N}_i(x_i)} \varphi_j(x) = 0$, $\forall i\in\mathcal{V}$, $x\in W^N$.
\end{enumerate}
More specifically, $\varphi_i(x)$ is a function of two main terms, a  \emph{goal function} $\gamma_i(x)$, that should vanish at the desired configuration, and an \emph{obstacle function}, 
$\beta_i(x)$ is a function that encodes inter-agent collisions, collisions between the agents and the obstacle boundary/undesired regions of interest, connectivity losses between initially connected agents and singularities of the Jacobian matrix $J_i(q_i)$; $\beta_i(x)$ vanishes when one or more of the above situation occurs.
Next, we provide a construction of the goal and obstacle terms. However, the construction of $\varphi_i$ is out of the scope of this paper. Examples can be found in \cite{panagou2017distributed,dimarogonas2007decentralized,dimarogonas2006feedback}\footnote{For instance, we can choose $\varphi_i = \tfrac{1}{1-\phi_i}$, where $\phi_i$ is the navigation function used in \cite{dimarogonas2007decentralized,dimarogonas2006feedback}.}. 



\subsubsection{$\gamma_i(x)$ - Goal Function}

The function $\gamma_i : W^{N} \to \mathbb{R}_{\ge 0}$ encodes the control objective of agent $i$, which is to achieve position and orientation formation with its neighboring agents. With that in mind, a reasonable choice of $\gamma_i(x)$ is:
\vspace{-3mm}
\begin{align} \label{eq:gamma_i}
	\gamma_i(x) &= \sum_{j \in \mathcal{N}_i(0)} \left\{ \gamma_{ij, p}(p_i, p_j) + \gamma_{ij, q}(q_i, q_j) \right\}, \notag \\ 
	&= \sum_{j \in \mathcal{N}_i(0)}  \gamma_{ij, x}(x_i, x_j) , 
\end{align}
where $\gamma_{ij, p}(p_i, p_j) =  \| p_i - p_j - p_{ij,\text{des}}\|^2, \gamma_{ij, q}(q_i, q_j) =  \| q_i - q_j - q_{ij,\text{des}}\|^2, \gamma_{ij, x}(x_i, x_j) = \| x_i - x_j - x_{ij,\text{des}}\|^2$. 
The function $\gamma_i(x)$ reaches its unique global minimum when both $p_i - p_j = d_{ij,\text{des}}$ and $q_i - q_j = q_{ij,\text{des}}, \forall i\in\mathcal{V}$, i.e., when both formation and orientation alignment are achieved between all the neighboring agents.

\subsubsection{$\beta_i(x)$ - Obstacle Function} \label{sec:beta_gamma_definitions}

The function $\beta_i(x) : W^{N} \to \mathbb{R}$, encodes all inter-agent collisions, collisions between the agents and obstacles, collisions with the boundary of the workspace, connectivity between initially connected agents and singularities of the Jacobian matrix $J_i(x_i)$. First, for each agent $i$, we define the functions $\eta_{ij,a}:\mathbb{R} \to \mathbb{R}_{\geq 0}, \eta_{iz,o}:\mathbb{R} \to \mathbb{R}_{\geq 0}, \eta_{ij,c}:\mathbb{R} \to \mathbb{R}_{\geq 0}$ where:
	\begin{align*}
		\eta_{ij,a}(d_{ij,a}) &= d^2_{ij,a}-\underline{d}^2_{ij,a}, \\
		\eta_{iz,o}(d_{iz,o}) &= d^2_{iz,o}-\underline{d}^2_{iz,o}, \\
		\eta_{ij,c}(d_{ij,a}) &= d^2_i-d^2_{ij,a}. 
	\end{align*}
The subscripts $j$ and $z$ correspond to agent $j\in\mathcal{V}\backslash\{i\}$ and obstacle $z\in\mathcal{Z}$, respectively, whereas the subscript $c$ stands for \textit{connectivity}. Let us also define the functions $b_{ij,a}: \mathbb{R}_{\geq 0} \to [0,1]$, $b_{iz,o}: \mathbb{R}_{\geq 0} \to [0,1]$, $b_{ij,c}: \mathbb{R}_{\geq 0} \to [0,1]$, $b_{iw}:\mathbb{R}_{\geq 0} \to \mathbb{R}$, $b_{J_i}: [-\tfrac{\pi}{2},\tfrac{\pi}{2}] \to [0,1]$, where:
	\begin{align*}
		&\hspace{-2mm} b_{ij,a}(x) =
		\begin{cases}
			\phi_{i,a}(x), &  0 \le x < d^2_i - \underline{d}^2_{ij,a} , \\
			1, &  d^2_i - \underline{d}^2_{ij,a} \le x, \\
		\end{cases} \\ 
		&\hspace{-2mm} b_{iz,o}(x) = 
		\begin{cases}
			\phi_{i,o}(x), &  0 \le x < d^2_i - \underline{d}^2_{iz,o}, \\
			1, & d^2_i - \underline{d}^2_{iz,o} \le x, 
		\end{cases} \\ 
		&\hspace{-2mm}b_{ij,c}(x) = 
		\begin{cases}
			0 &  x < 0,\\
			\phi_{i,c}(x), &   0 \le x < d^2_i - \underline{d}^2_{ij,a}, \\
			1, &  d^2_i - \underline{d}^2_{ij,a} \le x, 
		\end{cases} \\ 
		&\hspace{-2mm} b_{iw}(x) = \left[1 - \frac{x}{(r_w-r_i)^2}\right]^2, \\ 
		&\hspace{-2mm} b_{J_i}(x) = \cos^2(x). 
	\end{align*}
The functions $\phi_{i,a}, \phi_{i,o}, \phi_{i,c}$ are \emph{increasing polynomials}, appropriately selected to guarantee that the functions $b_{ij,a}, b_{iz,o}, b_{ij,c}$, respectively, are twice continuously differentiable everywhere, with $\phi_{i,a}(0)$ $= \phi_{i,o}(0)$ $= \phi_{i,c}(0) = 0$, $\forall i\in\mathcal{V}$. 
The functions $b_{iz,o}, b_{ij,c}$ have an identical behavior. We can now choose the function $\beta_i: W^{N} \to [0,1]$ as the following product for every $i \in \mathcal{V}$:
\vspace{-2mm}
\begin{align}
	& \beta_i(x) = b_{iw}(\lVert p_i \rVert^2) b_{J_i}(\theta_i) \left[\prod_{j\in\mathcal{V}\backslash\{i\}}^{} b_{ij,a}(\eta_{ij,a})\right] \notag\\
	&\hspace{15mm} \Bigg[\prod_{z\in\mathcal{Z}}^{} b_{iz,o}^{}(\eta_{iz,o}) \Bigg] \left[\prod_{j\in\mathcal{N}_i(0)}^{} b_{ij,c}^{}(\eta_{ij,c}) \right]. \label{eq:beta_function}
\end{align}
The functions $b_{ij,a}(\eta_{ij,a}), b_{iz,o}(\eta_{iz,o}), b_{ij,c}(\eta_{ij,c})$ correspond to inter-agent collision, collision with obstacles and connectivity maintenance, respectively, for agent $i\in\mathcal{V}$, while the functions $b_{iw}(\lVert p_i \rVert^2), b_{J_i}(\theta_i)$ correspond to collision with the workspace boundary and representation singularities. Each of these terms becomes zero when there is a collision, a connectivity break or a representation singularity. 
Note that all the aforementioned functions use only local information depending on the sensing range $d_i$ of agent $i$. 

\begin{remark}
	Note that the choice of the functions $\gamma_i$ and $\beta_i$ from \eqref{eq:gamma_i} and \eqref{eq:beta_function}, respectively, renders $\varphi_i$ from \eqref{eq:gamma_i}, \eqref{eq:beta_function} to be a function only of the neighboring states of $x_i$  i.e., decentralized. The function $\gamma_i$ depends on the initial neighboring set $\mathcal{N}_i(x_i(0))$ since the formation requirements need to be achieved between the agents that belong to the initial neighboring set $\mathcal{N}_i(x_i(0))$. Furthermore, the function $\beta_i$ depends on the time-varying neighboring set $\mathcal{N}_i(x_i)$, since in order to capture the collision avoidance goals, the neighboring sets $\mathcal{N}_i(x_i)$ need to be updated with new potential neighbors.
\end{remark}


With the introduced notation, the properties of the functions $\varphi_i$ are: 
\begin{enumerate}[(i)]
	\item $\beta_i(x)\to 0 \Leftrightarrow \varphi_i(x) \to \infty, \forall i\in\mathcal{V}$,
	\item $ \nabla_{x_i}\varphi_i(x)|_{x_i=x^\star_i} = 0,  \forall x^\star_i\in W \text{ s.t. } \gamma_i(x^\star_i) = 0$ and the regions of attraction of the points $\{x \in W^{N}: \nabla_{x_i}\varphi_i(x)|_{x_i=\widetilde{x}_i} = 0, \gamma_i(\widetilde{x}_i) \neq 0 \}, i\in\mathcal{V}$, are sets of measure zero.
\end{enumerate}

Next, we design bounded controllers $u_i$ such that all the specifications of Problem \ref{problem} are met, according to the following theorem, which summarizes the main results of this work.
\begin{theorem}
The decentralized control law $u_i: \mathbb{D}_i  \times \mathbb{R}^{6} \to \mathbb{R}^6$, with $\mathbb{D}_i = \{x\in W^N: \beta_i(x) > 0\}$ and 
	\begin{align}
		u_i(x, v_i) =& -\widetilde{k}_{i} v_i + g_i(x_i) \notag \\
		&\hspace{-15mm} - \left[ J^{-1}_i(x_i) \right]^\top \left\{\nabla_{x_i} \left(\varphi_i(x) + \sum\limits_{j\in\mathcal{N}_i(x_i)}\varphi_j(x)\right) \right\} \label{eq:controller_design},
	\end{align}
	for each agent $i \in \mathcal{V}$, with control gain $\widetilde{k}_i> 0$, 
	brings agent $i$ to its desired configuration from almost all initial conditions, while ensuring $\beta_i > 0,\forall i\in\mathcal{V}$ and the boundedness of all closed loop signals, providing a solution to Problem \ref{problem}.
\end{theorem}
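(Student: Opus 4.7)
My plan is a Lyapunov argument using the total potential-plus-kinetic energy of the closed loop, followed by a LaSalle-type invariance step. I will use the candidate function
\[
V(x,v) \;=\; \sum_{i\in\mathcal{V}} \varphi_i(x) \;+\; \tfrac{1}{2}\sum_{i\in\mathcal{V}} v_i^\top M_i(x_i)\, v_i,
\]
which is finite at $t=0$: the assumptions of Problem~\ref{problem} give $v_i(0)=0_{6\times1}$ and leave every $\beta_i(x(0))>0$, so every $\varphi_i(x(0))<\infty$ by property (i).

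To compute $\dot V$ along \eqref{eq:system_1}--\eqref{eq:system_2}, I exploit the symmetry of the sensing relation ($j\in\mathcal{N}_i(x_i)\Leftrightarrow i\in\mathcal{N}_j(x_j)$) together with the fact that $\varphi_k$ depends on $x_i$ only when $k=i$ or $i\in\mathcal{N}_k(x_k)$. Writing $V_\varphi\triangleq\sum_k\varphi_k$, this gives
\[
\nabla_{x_i}V_\varphi \;=\; \nabla_{x_i}\!\Bigl(\varphi_i+\!\!\sum_{j\in\mathcal{N}_i(x_i)}\!\!\varphi_j\Bigr),
\]
which is exactly the bracketed factor in \eqref{eq:controller_design}. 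Substituting \eqref{eq:controller_design} into \eqref{eq:system_2}, using $\dot x_i=J_i^{-1}(q_i)v_i$, and invoking the skew-symmetry $v_i^\top(\dot M_i-2C_i)v_i=0$, the gradient cross term in $\tfrac{d}{dt}V_\varphi$ is cancelled by the gradient term in $v_i^\top u_i$, leaving
\[
\dot V \;=\; -\sum_{i\in\mathcal{V}}\widetilde{k}_i\|v_i\|^2 \;\le\; 0.
\]

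I will then turn this monotonicity into a barrier certificate for items (2)--(6) of Problem~\ref{problem}. Since $V(t)\le V(0)<\infty$ for all $t\ge 0$ and each $\varphi_i\to\infty$ whenever $\beta_i\to 0$ (property (i)), no inter-agent collision, agent--obstacle collision, workspace-boundary collision, connectivity loss between initially-connected agents, nor representation singularity $|\theta_i|=\pi/2$ can occur on $[0,\infty)$. Consequently $J_i^{-1}(q_i)$ remains well defined and the state stays in a compact safe set; combined with the uniform positive definiteness of $M_i$ on that set this bounds every $v_i$, and hence every closed-loop signal including $u_i$.

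For convergence I will apply LaSalle's invariance principle on the positively invariant compact sublevel set $\Omega=\{(x,v):V(x,v)\le V(0)\}$. In the largest invariant subset of $\{\dot V=0\}$ we have $v_i\equiv 0$ and therefore $\dot v_i\equiv 0$, so \eqref{eq:system_2} with the controller forces $\nabla_{x_i}(\varphi_i+\sum_{j\in\mathcal{N}_i(x_i)}\varphi_j)=0$ for every $i$; by property (iii) this reduces to $\nabla_{x_i}\varphi_i(x)=0$, and by property (ii) the critical set consists of the desired formation plus a set of unstable equilibria whose basin has Lebesgue measure zero, yielding item (1) of Problem~\ref{problem} from almost all admissible initial conditions. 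The step I expect to be the most delicate is the bookkeeping for the \emph{time-varying} neighbor set $\mathcal{N}_i(x_i)$: one must use the smooth vanishing of $b_{ij,a},b_{iz,o},b_{ij,c}$ at their activation thresholds to keep $\varphi_i$ in $C^2$ as neighbors enter or leave the sensing ball, so that the identity for $\nabla_{x_i}V_\varphi$ above, and hence the cancellation producing $\dot V=-\sum\widetilde k_i\|v_i\|^2$, remain valid through such switching events.
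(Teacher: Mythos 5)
Your proposal is correct and follows essentially the same route as the paper: the identical Lyapunov-like function $\sum_i(\varphi_i+\tfrac12 v_i^\top M_i v_i)$, the same symmetry identity collapsing $\nabla_{x_i}\sum_k\varphi_k$ to the bracketed term in the controller, the same cancellation yielding $\dot V=-\sum_i\widetilde k_i\|v_i\|^2$, the same barrier argument from $\varphi_i\to\infty$ as $\beta_i\to0$, and the same LaSalle step using properties (ii)--(iii) to land on $\nabla_{x_i}\varphi_i=0$ from almost all initial conditions. Your closing caveat about the time-varying neighbor sets is a point the paper handles only implicitly (via the $C^2$ construction of $b_{ij,a},b_{iz,o},b_{ij,c}$), but it does not change the argument.
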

\begin{proof}
	Consider the nonnegative Lyapunov-like function for the system \eqref{eq:system} $L:\mathbb{D}_1\times\dots\times\mathbb{D}_N\times\mathbb{R}^{6N}$, with:
	\begin{align}
    \hspace{-3mm}L(x,v) = \sum_{i \in \mathcal{V}} \left\{ \varphi_i(x) + \frac{1}{2}v_i^\top M_i(x_i) v_i\right\}, \label{eq:L}
	\end{align}
	Since the system configuration at $t=0$ is singularity- and collision-free, the functions $\beta_i$ are strictly positive at $t=0, \forall i\in\mathcal{V}$. Furthermore, $v_i(0) = 0_{6 \times 1}, \forall i\in\mathcal{V}$. Thus, $L$ is initially bounded, i.e., there exists a positive and finite constant $M$ such that
	\begin{equation} \label{eq:V_0_bounded}
		L_0 \triangleq L(x(0), v(0)) \leq M.
	\end{equation}
	By differentiating \eqref{eq:L} with respect to time, substituting the dynamics \eqref{eq:system}, using the skew-symmetry of $\dot{M}_i - 2C_i, \forall i\in\mathcal{V}$ and using the fact that  $\sum_{i\in\mathcal{V}} ( [\nabla_{x_i}\varphi_i(x)]^\top\dot{x}_i + \sum_{j\in\mathcal{N}_i(x_i)} [\nabla_{x_j}\varphi_i(x)]^\top\dot{x}_j ) = \sum_{i\in\mathcal{V}}( [\nabla_{x_i}\varphi_i(x) ]^\top +  \sum_{j\in\mathcal{N}_i(x_i)} [\nabla_{x_i}\varphi_j(x)]^\top) \dot{x}_i$, we obtain:
	\begin{align}
		\dot{L}&= \sum_{i \in \mathcal{V}} v^\top_i\Big[ \left( J^{-1}_i(x_i) \right)^\top \nabla_{x_i}\varphi_i(x) + u_i -g_i(x_i)  \notag \\
		&\hspace{10mm} + \sum_{j\in\mathcal{N}_i(x_i)} [\nabla_{x_i}\varphi_j(x)] \Big] , \notag
	\end{align}
	which, by substituting the control law \eqref{eq:controller_design},   becomes $\dot{L} \le - \sum_{i \in \mathcal{V}} \widetilde{k}_{i} \|v_i\|^2$.
    Therefore,  $L$ is non-increasing and hence, in view of \eqref{eq:V_0_bounded}, we conclude that $L(x(t),v(t)) \le L_0 \le M, \forall t \in \mathbb{R}_{\ge 0}$,
	and the boundedness of all the terms $x(t)$, $v(t)$, $\varphi_i(x(t)), \forall t \in \mathbb{R}_{\ge 0}$. By invoking the properties of $\varphi_i(x)$, we also conclude that $\beta_i(x(t)) > 0$, $\forall t\in\mathbb{R}_{\geq 0}$ and hence,  
	inter-agent collisions, collisions between the agents and the obstacles/workspace boundary as well as connectivity losses and singularities, are avoided. 
	
	Moreover, by invoking LaSalle's Invariance Principle, the state of the system converges to the largest invariant set contained in the set $S = \left\{x,v: \dot{L}(x,v) = 0\right\} = \left\{x,v : v_i = 0_{6\times 1}, \forall i \in \mathcal{V}\right\}$.
	For $S$ to be invariant we require $\dot{v}_i = 0_{6\times1}$, from which, the closed-loop system implies $\left[J^{-1}_i(x_i)\right]^\top \nabla_{x_i}\varphi_i(x) = 0_{6\times1}$, where the properties of $\varphi_i(x)$ were exploited. Note that, since $\beta_i(x(t)) > 0$, $\forall t\in\mathbb{R}_{\geq 0}$, $J(x_i(t))$ is always nonsingular. Therefore, we conclude that the closed loop system will converge to the configuration where $\nabla_{x_i}\varphi_i(x) = 0_{6\times1}, \forall i\in\mathcal{V}$. According to the inherent properties of the functions $\varphi_i(x)$, this will happen from all initial conditions except for a set of measure zero, i.e., almost all initial conditions \cite{dimarogonas2006feedback,koditschek1992robot}. Moreover, it can be shown that
	the control laws \eqref{eq:controller_design} stay bounded $\forall t\in\mathbb{R}_{\geq 0}, i\in\mathcal{V}$.
	\begin{figure}[t!]
	   \vspace{0.3cm}
		\centering
		\includegraphics[scale = 0.25]{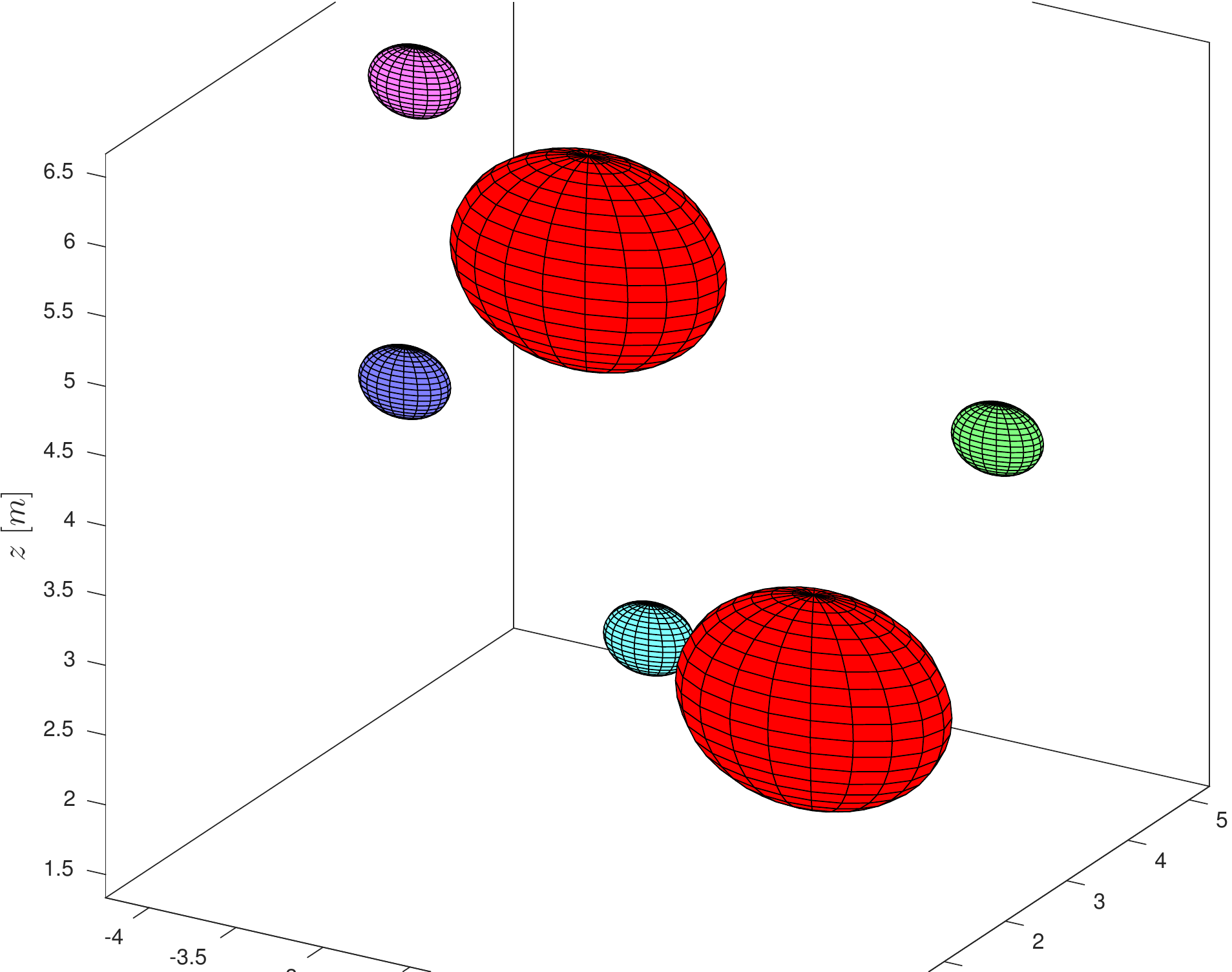}
		\caption{The initial workspace of the simulated scenario ($t=0$). Agent 1 (with blue), agent 2 (with  green), agent 3 (with cyan) and agent 4 (with purple) and two obstacles (with red).}
		\label{fig:init_ws}       
	\end{figure}	
	\begin{figure}[t!]
		\centering
		\includegraphics[trim={0 0 0cm -2cm},scale = 0.34]{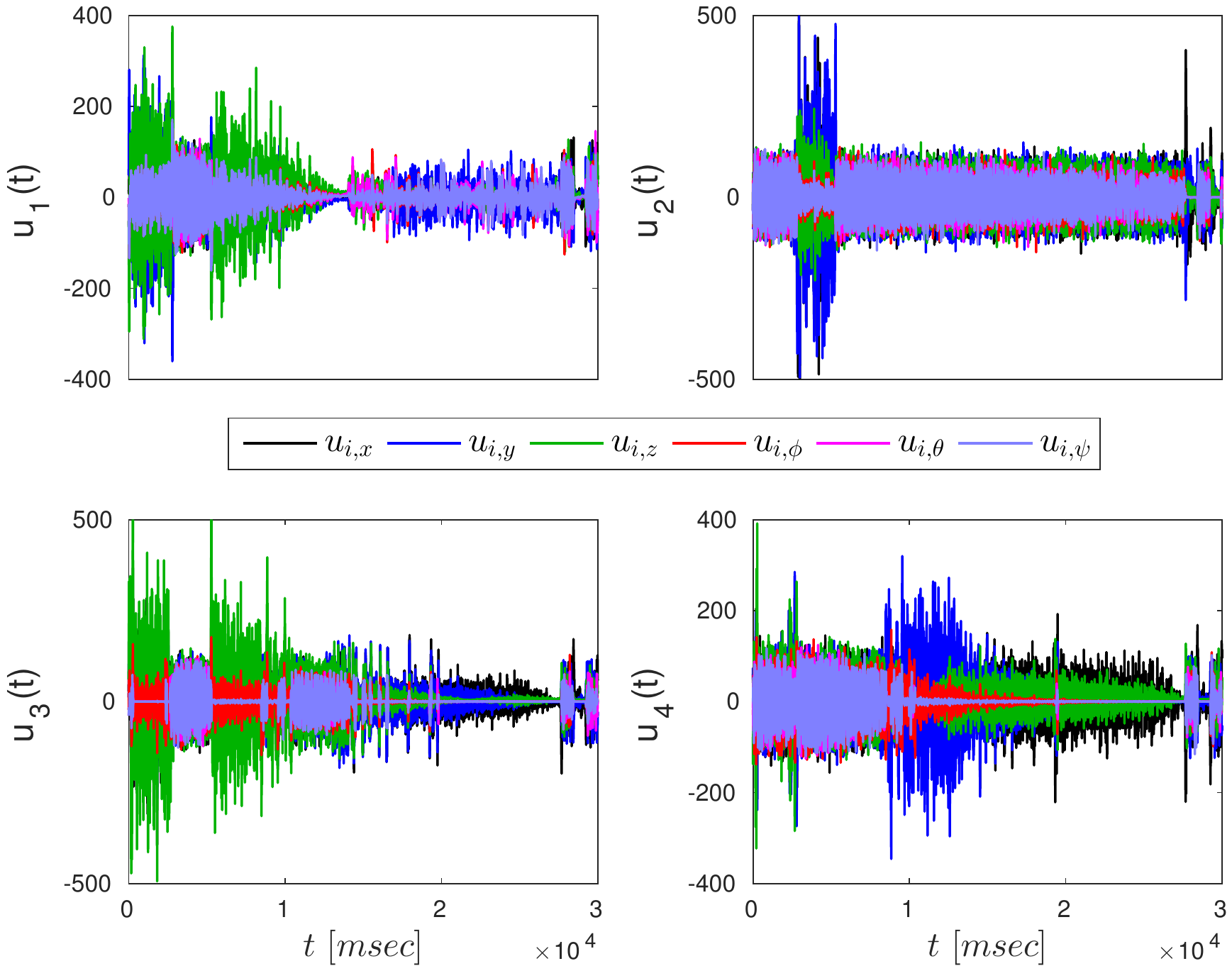}
		\caption{The resulting control inputs $u_i,\forall i\in\{1,\cdots,4\}$.}
		\label{fig:inputs}       
	\end{figure}
\end{proof}
\begin{remark}
	Note that the design of the obstacle functions \eqref{eq:beta_function} renders the control laws \eqref{eq:controller_design} decentralized, in the sense that each agent uses only local information with respect to its neighboring agents, according to its limited sensing radius. Each agent can obtain the necessary information to cancel the term $\sum_{j\in\mathcal{N}_i(x_i)} \nabla_{x_i}\varphi_j(x)$ from its neighboring agents. 
\end{remark}

\vspace{-2mm}

\begin{figure}[t!]
\vspace{0.3cm}
	\centering
	\includegraphics[scale = 0.35]{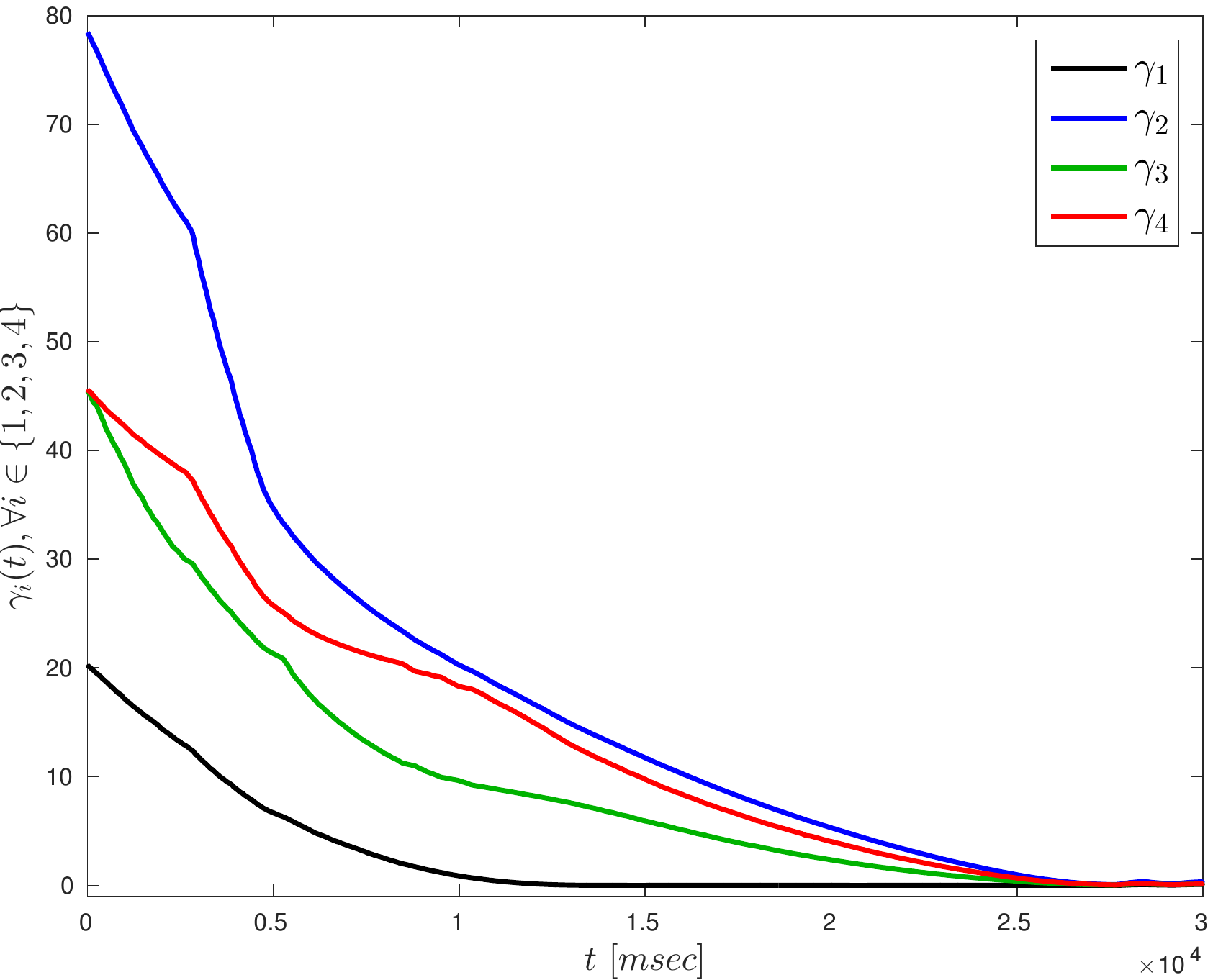}
	\caption{The evolution of the goal functions $\gamma_i,\forall i\in\{1,\cdots,4\}$, which are shown to converge to zero.}
	\label{fig:gammas}       
\end{figure}

\begin{figure}[t!]
	\centering
	\includegraphics[scale = 0.38]{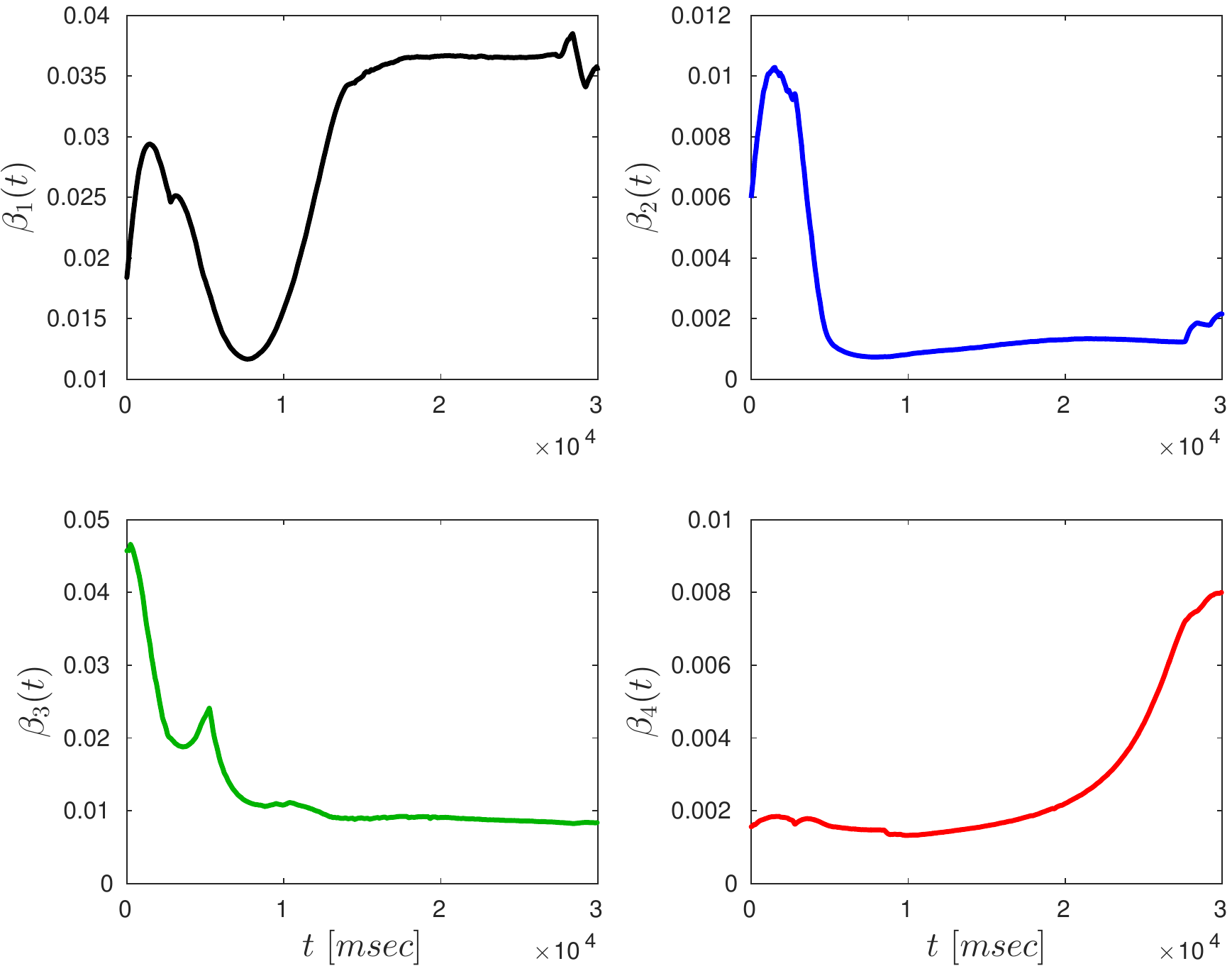}
	\caption{The evolution of the obstacle functions $\beta_i,\forall i\in\{1,\cdots,4\}$, which are shown to be always positive.}
	\label{fig:betas}       
\end{figure}


\section{Simulation Results} \label{sec:simulation_results}

To demonstrate the efficiency of the proposed control protocol, we consider a simulation example with $N=4$, $\mathcal{V}$ $= \{1,2,3,4\}$ spherical agents of the form \eqref{eq:system}, with $r_i$ $=0.25\text{m}$ and $d_i$ $= 5\text{m}$,$\forall i\in\{1,\dots,4\}$. The initial conditions are set to $p_1(0)=[-3,0,5]^\top \ \text{m}$, $p_2(0)=[-1,4,4]^\top \ \text{m}$, $p_3(0)=[-3,4,2]^\top \ \text{m}$, $p_4(0)=[-4,3,6]^\top  \ \text{m}$, $q_1(0)$ $=q_2(0)$ $=q_3(0)$ $=q_4(0)$ $= [0,0,0]^\top \ \text{r}$, which imply the initial neighboring sets $\mathcal{N}_1(0)=\{2\}$,$\mathcal{N}_2(0)=\{1,3,4\}$, $\mathcal{N}_3(0)=\{2,4\}$, $\mathcal{N}_4(0)=\{2,3\}$. The desired formation is defined by the feasible displacements $p_{12,\text{des}}= -p_{21,\text{des}} = [-1,-1,-2]^\top \ \text{m}, p_{23,\text{des}}= -p_{32,\text{des}} = [-2,-3,0]^\top \ \text{m}, p_{24,\text{des}}= -p_{42,\text{des}} = [-1,-2,0]^\top \ \text{m}, p_{34,\text{des}}= -p_{43,\text{des}} = [1,1,0]^\top \ \text{m}, q_{12,\text{des}}= -q_{21,\text{des}} = [-\tfrac{\pi}{4},0,-\tfrac{\pi}{4}]^\top \ \text{r}, q_{23,\text{des}}= -q_{32,\text{des}} = [-\tfrac{\pi}{12},0,0]^\top \ \text{m}, q_{24,\text{des}}= -q_{42,\text{des}} = [-\tfrac{\pi}{8},0,0]^\top \ \text{r}, q_{34,\text{des}}= -q_{43,\text{des}} = [\tfrac{5\pi}{24},0,0]^\top \ \text{r}$. We consider a workspace of radius $r_w = 10\text{m}$ containing two spherical static obstacles at $p_{o_1} = [-3,3,5]^\top\text{m}, p_{o_2} = [-1,1,3]^\top\text{m}$ with radii $r_{o_1} = r_{o_2} = 0.75\text{m}$. An illustration of the workspace with the agents at $t=0$ is given in Fig. \ref{fig:init_ws}. 
The potential function that we used is the one developed in \cite{dimarogonas2006feedback}.
The simulation results are depicted in Fig. \ref{fig:gammas}- Fig. \ref{fig:final_ws} for $t\in[0,30]\text{s}$. In particular, Fig. \ref{fig:gammas} shows the evolution of the goal functions $\gamma_i,\forall i\in\{1,\cdots,4\}$, which are decreasing to zero, whereas Fig. \ref{fig:betas} depicts the obstacle functions $\beta_i$, $\forall i\in\{1,\cdots,4\}$, which stay always positive. Furthermore, the control inputs are shown in Fig. \ref{fig:inputs}. Finally, the navigation of the agents in the workspace is pictured in Fig. \ref{fig:final_ws}. As proven in the theoretical analysis, the formation is successfully achieved and all the specifications of Problem \ref{problem} are met.

\begin{figure}[t!]
	\centering
	\includegraphics[trim={0 0 0cm -2cm},scale = 0.37]{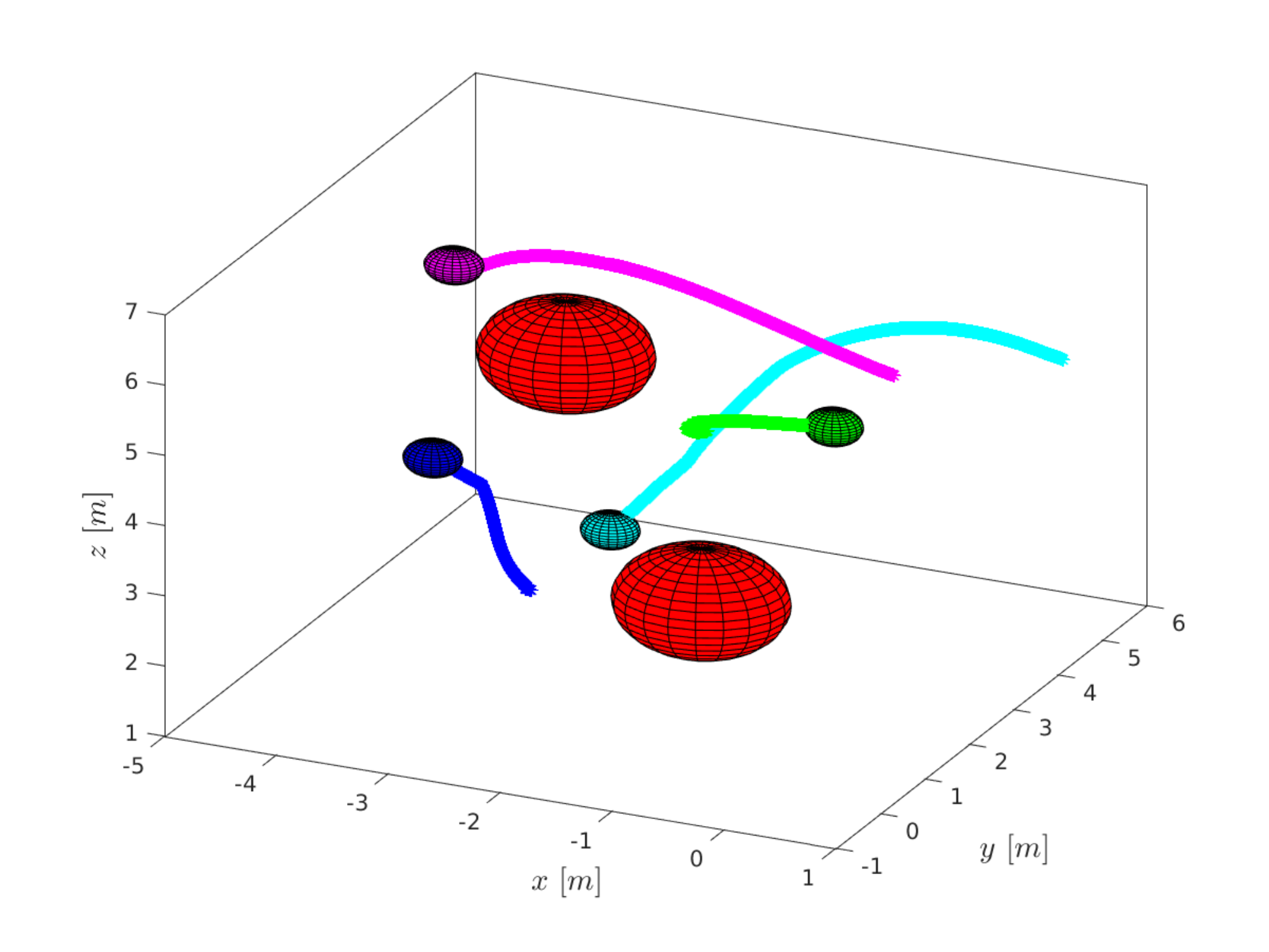}
	\caption{The motion of the agents in the workspace for $t\in[0,30]\text{s}$.}
	\label{fig:final_ws}       
\end{figure}

\section{Conclusions and Future Work} \label{sec:conclusions}

In this work we proposed a potential function- based decentralized control protocol for multi-agent systems which guarantees formation control with inter-agent collision avoidance, collision avoidance between the agents and the obstacles/workspace boundary, connectivity maintenance as well as singularity avoidance of multiple rigid bodies. Simulation results have  verified the validity of the proposed approach. Future efforts will be devoted towards developing global results as well as real-time experiments.

\bibliography{references}
\bibliographystyle{ieeetr}
\addtolength{\textheight}{-12cm}

\end{document}